\documentclass[conference]{IEEEtran}
\usepackage{amsmath,amssymb,amsfonts}
\usepackage{bm}
\usepackage{graphicx}
\usepackage{amsthm}
\usepackage{algorithm}
\usepackage{algpseudocode}
\usepackage{cite}
\graphicspath{{./}}
\newtheorem{proposition}{Proposition}
\title{Factor-Wise Parameter Learning in Score-Based VAMP}

\author{\IEEEauthorblockN{Siqi Na and Tadashi Wadayama}
  \IEEEauthorblockA{Department of Computer Science,
  Nagoya Institute of Technology, Nagoya, Japan\\
  E-mail: \{nasiqi, wadayama\}@nitech.ac.jp}}

\begin{document}
\maketitle

\begin{abstract}
Score-based vector approximate message passing (SC-VAMP)
builds state-evolution-predictable inference from modular soft-input
soft-output (SISO) factors, but every factor must be handed
correct parameters: a wrong sparsity rate, signal variance, or noise level
distorts the tilted distributions and Onsager terms, raises the fixed-point
mean-squared error (MSE), and
invalidates matched state-evolution (SE) prediction. We let each factor learn
its own parameters instead. An expectation--maximization (EM) M-step taken
under the module's tilted distribution is attached to every parameterized
factor and costs $O(N)$ per module, because that distribution and its moments
are precisely what the SISO rule already
computes on the way to the score; the score-based interface and the Onsager
correction are left untouched. Under standard large-system SE assumptions and
identifiability conditions, the true parameters and the matched SE variance form
a fixed point of the recursion, at which the estimate attains the replica
minimum MSE (MMSE); prior factors and
likelihood/linear-MMSE (LMMSE) factors are treated
separately, the latter
under the Gaussian cavity induced by the VAMP transformed-error model. On linear
and one-bit Bernoulli--Gaussian compressed sensing, the updates recover
near-oracle accuracy from strongly mismatched initializations.
\end{abstract}
\begin{IEEEkeywords}
Vector approximate message passing, score function,
expectation--maximization, parameter estimation, one-bit quantization
\end{IEEEkeywords}

\section{Introduction}

High-dimensional Bayesian inverse problems require both an inference method
and a statistical model for the unknown signal and the observation process.
In practice, parameters of this model, such as sparsity rates,
component variances, and noise levels, are seldom known
beforehand and must be estimated
jointly with the signal. This uncertainty is consequential for iterative
Bayesian inference: a parameter mismatch changes the local posterior
approximation and propagates the resulting error through subsequent messages.
The problem arises in compressed sensing (CS), imaging,
communications, and
quantized sensing, where the observation model may be linear or nonlinear and
the sensing operator may be far from an ideal i.i.d. random matrix.

Approximate message passing (AMP) provides an efficient approach to such
high-dimensional estimation problems and admits a scalar state evolution (SE)
for large i.i.d. sensing matrices \cite{Donoho2009AMP}. Generalized AMP (GAMP)
incorporates separable nonlinear observation channels
\cite{Rangan2011GAMP}, whereas vector AMP (VAMP) retains SE-predictable
inference for the broader class of right-rotationally invariant (RRI) matrices
\cite{Rangan2019VAMP}, where standard AMP/GAMP can be fragile
\cite{Rangan2019AMPConv}. Score-based VAMP (SC-VAMP) is a modular
reformulation of VAMP in which a prior, likelihood, or coupling factor is
represented by a common soft-input soft-output (SISO) module
\cite{Wadayama2026ScoreVAMP}. Given an incoming Gaussian cavity, a module forms
the tilted distribution obtained by combining that cavity with its factor. Its
score and moments determine the posterior estimate, the Onsager coefficient,
and the outgoing extrinsic Gaussian message. Consequently, heterogeneous
factors can be composed without re-deriving the surrounding message-passing
protocol.

A module's score may itself be learned from data. For the signal prior in
compressed sensing this is rarely an option, since $\bm x$ is the estimand and
samples for learning its score are unavailable; the prior must be given as an
explicit parametric family.
This modular inference interface does not by itself resolve model uncertainty.
Every tilted distribution still depends on the parameters of its associated
factor, and a mismatched parameter affects both the module estimate and the
extrinsic variance communicated to the remainder of the graph. For example, in
Bernoulli--Gaussian (BG) compressed sensing, errors in the sparsity rate,
active-component variance, or measurement-noise variance alter the denoising
and observation modules simultaneously and can move the algorithm away from
its matched-SE operating point \cite{Takahashi2022Mismatch}.

Parameter learning has been incorporated into several message-passing
architectures. EM-GAMP estimates prior and noise parameters in generalized
linear models \cite{Vila2013EMGMAMP}; EM-VAMP and adaptive VAMP provide
learning rules and consistency results for linear RRI systems
\cite{Fletcher2016EMVAMP,Fletcher2018Adaptive}; and EM-GVAMP treats nonlinear
output channels within generalized VAMP \cite{Metzler2018EMGVAMP}. Risk-based
methods instead tune denoisers through criteria such as Stein's unbiased risk
estimate \cite{Mousavi2018Parameterless}. These methods establish the value of
joint inference and parameter estimation within important AMP/VAMP
architectures. The question considered here is how to express adaptation at
the same factor boundary as the SC-VAMP SISO rule, so that the update depends
only on a module's factor and the tilted moments already available within
that module.
Such a rule would let a parameterized factor be moved into a
different factor graph without re-deriving either its M-step or the surrounding
protocol, and would hold the algorithm at its matched-SE operating point
without any hand-tuned input.

This paper develops factor-wise parameter learning for SC-VAMP. First, we
attach an expectation--maximization update to each parameterized factor. The
update is computed locally from the tilted moments already produced by the
SISO rule, adding only $O(N)$ work per module without changing the score-based
message interface or its Onsager correction; we refer to the resulting
algorithm as adaptive SC-VAMP. Second, under standard
large-system SE and identifiability assumptions, we show that the true factor
parameters together with the matched SE variances form a population fixed
point. The analysis treats prior factors and likelihood/LMMSE factors
separately, with the latter justified under the Gaussian cavity induced by the
VAMP transformed-error model. Third, experiments on linear and one-bit BG
compressed sensing demonstrate recovery close to the oracle-parameter baseline
from strongly mismatched initializations, while also exposing the scale
identifiability limitation of one-bit observations.

\section{Adaptive SC-VAMP}
\label{sec:method}
SC-VAMP estimates $\bm x\sim p_X$ from
\begin{equation}
\bm y=\phi(\bm A\bm x+\bm w),\qquad
\bm w\sim\mathcal N(\bm 0,\sigma_w^2\bm I),
\label{eq:model}
\end{equation}
with RRI $\bm A\in\mathbb R^{M\times N}$, by composing modules through a
uniform mean--variance interface.
Throughout, $\bm A$ and the form of $\phi$ are known; unknown
are $\bm x$, the parameters of its prior $p_X$, and the noise variance
$\sigma_w^2$, all of which are estimated jointly.

\subsection{Preliminary: SISO module}
We recall the SC-VAMP SISO rule \cite{Wadayama2026ScoreVAMP}. A module owns one
factor $\phi(\bm u)$ over $\bm u\in\mathbb R^N$. Given an incoming Gaussian
message $(\bm r,v)$, interpreted as $\bm r=\bm u+\sqrt v\,\bm n$ with
$\bm n\sim\mathcal N(\bm 0,\bm I)$, it forms
$\tilde p(\bm u)\propto\phi(\bm u)\,\mathcal N(\bm u;\bm r,v\bm I)$ with
log-partition
$\log Z(\bm r)=\log\int\phi(\bm u)\mathcal N(\bm u;\bm r,v\bm I)\,d\bm u$
and score $\bm s=\nabla_{\bm r}\log Z$, and returns
\begin{equation}
\begin{aligned}
\bm r_{\mathrm{post}}&=\bm r+v\bm s,
&\alpha&=1-\tfrac{v}{N}J,\ \ J=\mathbb E\|\bm s\|^2,\\
v_{\mathrm{out}}&=\tfrac{\alpha}{1-\alpha}v,
&\bm r_{\mathrm{out}}&=\tfrac{1}{1-\alpha}(\bm r_{\mathrm{post}}-\alpha\bm r).
\end{aligned}
\label{eq:siso}
\end{equation}
here $\bm r_{\mathrm{post}}=\mathbb E_{\bm u\sim\tilde p}[\bm u]$ is the Tweedie posterior
mean, $J=\mathbb E\|\bm s\|^2$ the Fisher information removed by the Onsager
coefficient $\alpha$, and $(\bm r_{\mathrm{out}},v_{\mathrm{out}})$ the extrinsic
message
\cite{Wadayama2026ScoreVAMP}. The adaptive step below keeps this interface unchanged
and only updates factor parameters.

\subsection{Self-tuning iteration}
Let parameterized module $m$ own a factor
$\phi_m(\bm u_m,\bm y_m;\bm\theta_m)$, where $\bm y_m$ is absent for a prior and
denotes likelihood data otherwise. Given Gaussian cavity message
$(\bm r_m,v_m)$, it forms
\begin{equation}
\begin{aligned}
\tilde p_m(\bm u_m\mid\bm y_m,\bm r_m;\bm\theta_m)
&\propto
\phi_m(\bm u_m,\bm y_m;\bm\theta_m)\\
&\quad{}\times \mathcal N(\bm u_m;\bm r_m,v_m\bm I).
\end{aligned}
\label{eq:module-tilt}
\end{equation}
Beyond the SISO update, the module re-estimates its own factor by a local
EM M-step taken under the tilted distribution,
\begin{equation}
\hat{\bm\theta}_m=\arg\max_{\bm\theta_m}\
\mathbb E_{\bm u_m\sim\tilde p_m}\big[\log\phi_m(\bm u_m,\bm y_m;\bm\theta_m)\big],
\label{eq:mstep}
\end{equation}
the expectation over the latent $\bm u_m\sim\tilde p_m$ alone ($\bm y_m,\bm r_m$
are given data). As $\log\phi_m$ separates over coordinates, this expectation is
the per-coordinate sum the algorithm computes (e.g.\ $\hat\rho=\tfrac1N\sum_i\gamma_i$
below); in the large-system limit it concentrates on the factor's population
log-likelihood, maximized at the true parameter (Proposition~\ref{prop:fp}). The
extra EM cost is $O(N)$ per module because the required tilted moments are already
computed by the SISO update. The global parameter vector is damped as
$\bm\theta\gets(1-\beta)\bm\theta+\beta\hat{\bm\theta}$, and one sweep applies
\eqref{eq:siso} and \eqref{eq:mstep} at every module, summarized in
Alg.~\ref{alg:adaptive}.

\begin{algorithm}[t]
\caption{Adaptive SC-VAMP}
\label{alg:adaptive}
\begin{algorithmic}[1]
\Require modules $\{\phi_m\}$; initial $\bm\theta$; damping $\beta$; iterations $T$
\For{$t=1,\dots,T$}
  \For{each module $m$ in schedule order}
    \State update its extrinsic message by the SISO map \eqref{eq:siso}
    \State $\hat{\bm\theta}_m\gets$ M-step \eqref{eq:mstep} of factor $\phi_m$
  \EndFor
  \State $\bm\theta\gets(1-\beta)\bm\theta+\beta\,\hat{\bm\theta}$
\EndFor
\State \Return posterior mean $\hat{\bm x}$ of the prior module
\end{algorithmic}
\end{algorithm}

\subsection{Instantiations}
For the BG prior
$p_X(x_i)=(1-\rho)\delta(x_i)+\rho\mathcal N(x_i;0,\sigma_x^2)$, the prior
module is shared by the linear and one-bit experiments. With incoming
$(\bm r_A,v_A)$ and $r_i\equiv r_{A,i}$,
\begin{equation}
\begin{aligned}
D_i&=(1{-}\rho)\mathcal N(r_i;0,v_A)+\rho\mathcal N(r_i;0,\sigma_x^2{+}v_A),\\
\gamma_i&=\rho\mathcal N(r_i;0,\sigma_x^2{+}v_A)/D_i,\\
\mu_i&=\frac{\sigma_x^2}{\sigma_x^2+v_A}r_i,\qquad
\nu=\frac{\sigma_x^2v_A}{\sigma_x^2+v_A}.
\end{aligned}
\end{equation}
Then $\hat x_i=\gamma_i\mu_i$, and the EM prior update is
\begin{equation}
\hat\rho=\frac1N\sum\nolimits_i\gamma_i,\qquad
\hat\sigma_x^2=\frac{\sum_i\gamma_i(\mu_i^2+\nu)}{\sum_i\gamma_i}.
\end{equation}
For linear CS, the LMMSE observation module gives the noise update
\begin{equation}
\hat\sigma_w^2=\frac1M\|\bm y-\bm A\bm x_{\mathrm p}\|^2+c\,\sigma_w^2,\qquad
c=\frac{v_B}{v_B+\sigma_w^2},
\label{eq:noise-mstep}
\end{equation}
where $\bm x_{\mathrm p}$ is the LMMSE posterior mean and
$v_B$ the variance of the incoming message from the prior module.
For one-bit CS, we use the probit factor
$p(y_i\mid z_i;\sigma_w^2)=\Phi(y_i z_i/\sigma_w)$ in the same SISO interface;
since only $\sigma_x^2/\sigma_w^2$ is identifiable, we fix $\sigma_w^2{=}1$ and
tune $(\rho,\sigma_x^2)$.

\subsection{Fixed-point optimality}
One sweep induces the population map
$(\bm\theta,\bm v)\mapsto\bigl(\mathcal M(\bm\theta,\bm v),\,F(\bm v;\bm\theta)\bigr)$,
where $F$ is the matched SE variance recursion and $\mathcal M$ is the large-system
limit of the per-module M-steps \eqref{eq:mstep} (the coordinate sums
concentrating on channel expectations). Let
$\bm v^\star=F(\bm v^\star;\bm\theta^0)$. We use the following standard AMP/VAMP
large-system assumptions in the RRI setting: (A1) the scalar-equivalent model
\cite[Thm.~1]{Rangan2019VAMP}, in which each module's input is its true variable
plus i.i.d.\ Gaussian noise of the SE-tracked variance; (A2)
empirical/population averaging of per-component statistics; (A3) exact
scores, so SC-VAMP reduces to VAMP, rigorous here for the linear model
\cite[Sec.~III-D]{Wadayama2026ScoreVAMP}; and (A4) a unique replica fixed point,
under which the matched SE fixed point is Bayes-optimal
\cite[Thm.~3]{Rangan2019VAMP}. We also assume each parameterized factor is a
normalized density, correctly specified and identifiable; for likelihood
factors, that the Gaussian-cavity auxiliary law has a finite tilted normalizer
and identifies $\bm\theta_m^0$ through the resulting weighted conditional KL,
the local compatibility condition the EM proof needs; and, for mixed priors
such as BG, that the prior case is read with respect to the natural latent
support variable.

\begin{figure*}[t]
\centering
\includegraphics[width=\textwidth]{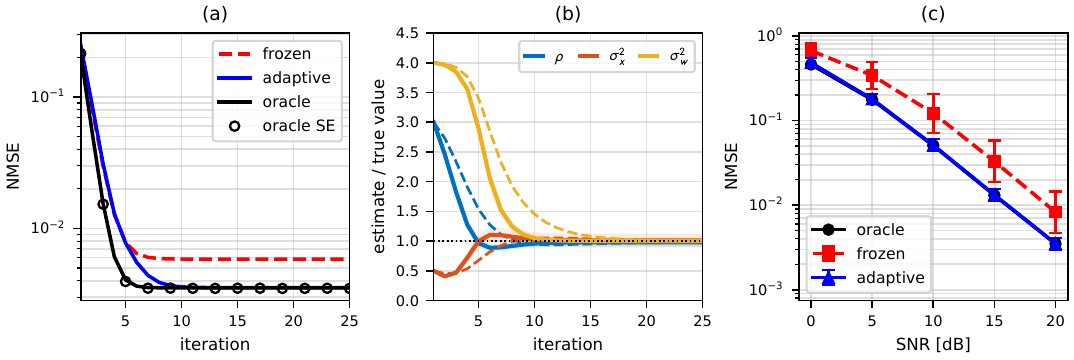}
\caption{Linear BG CS: (a)~NMSE vs.\ iteration from a mismatched start;
(b)~parameter estimates vs.\ iteration under the same start;
(c)~NMSE vs.\ SNR with per-trial randomized initialization.
Oracle uses the true parameters; frozen keeps the wrong initial ones.}
\label{fig:lin}
\end{figure*}

\begin{figure*}[t]
\centering
\includegraphics[width=\textwidth]{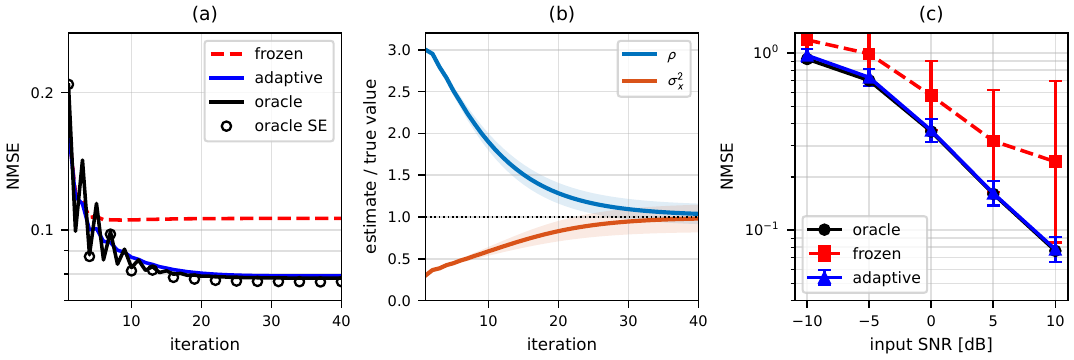}
\caption{One-bit CS: (a)~NMSE vs.\ iteration under a mismatched start;
(b)~parameter estimates vs.\ iteration; (c)~NMSE vs.\ input SNR with
per-trial randomized initialization.}
\label{fig:1bit}
\end{figure*}

\begin{proposition}\label{prop:fp}
Under the above assumptions, $(\bm\theta^0,\bm v^\star)$ is a fixed point of the
population adaptive SC-VAMP recursion. At this matched fixed point, the
posterior estimate $\hat{\bm x}$ attains the replica (Bayes) MMSE.
\end{proposition}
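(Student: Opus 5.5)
The fixed-point claim has two parts, one per coordinate of the population map. The variance coordinate is immediate: $\bm v^\star=F(\bm v^\star;\bm\theta^0)$ by definition, so it suffices to show $\mathcal M(\bm\theta^0,\bm v^\star)=\bm\theta^0$. This is a statement about each parameterized module separately, because the M-step \eqref{eq:mstep} depends only on module $m$'s factor and its tilted law. Since $\mathcal M$ is the large-system limit of the per-coordinate sums, (A2) lets me replace each empirical M-step objective by its population counterpart. By (A1) and (A3), the cavity seen by module $m$ at $(\bm\theta^0,\bm v^\star)$ is the scalar-equivalent channel $r=u+\sqrt{v_m^\star}\,n$ with $u$ distributed according to the true factor. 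The population objective is therefore
\[
Q_m(\bm\theta_m)=\mathbb E_{(u,r,y)}\,\mathbb E_{\tilde p_m(\cdot\mid y,r;\bm\theta_m^0)}\bigl[\log\phi_m(u',y;\bm\theta_m)\bigr].
\]
The goal is to show that $Q_m$ is maximized at $\bm\theta_m^0$.

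For a prior factor I would use the tower property. When the tilted law is built with the true $\bm\theta_m^0$ and the true SE variance, it is exactly the Bayes posterior of $u$ given $r$ in the scalar channel. Averaging the posterior expectation over the joint law of $(u,r)$ therefore returns the marginal expectation, so $Q_m(\bm\theta_m)=\mathbb E_{u\sim p(\cdot;\bm\theta_m^0)}[\log p(u;\bm\theta_m)]$. This differs from $-\mathrm{KL}(p_{\bm\theta^0_m}\|p_{\bm\theta_m})$ only by a constant, so Gibbs' inequality together with identifiability makes $\bm\theta_m^0$ the unique maximizer. For mixed priors such as BG, I would carry out the same computation on the joint law of $u$ and its latent support indicator, as the assumptions prescribe. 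As a sanity check, in the BG case this reproduces $\hat\rho\to\rho^0$ and $\hat\sigma_x^2\to\sigma_x^{2,0}$, because $\mathbb E[\gamma_i]=\rho^0$ and $\mathbb E[\gamma_i(\mu_i^2+\nu)]=\rho^0\sigma_x^{2,0}$.

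Likelihood and LMMSE factors are the main obstacle. Here the tilted law $\tilde p_m\propto\phi_m(u,y;\bm\theta)\,\mathcal N(u;r,v)$ conditions on both $y$ and the cavity $r$. The Gaussian cavity is an artificial extrinsic message, not a genuine prior on $u$, so the tower property does not apply directly. I would introduce the auxiliary joint law $q(u,y,r)\propto\phi_m(u,y;\bm\theta_m^0)\,\mathcal N(u;r,v^\star)\,\pi(r)$, with $\pi$ the SE law of the cavity. Its normalizer is finite by assumption. Under the VAMP transformed-error model, the data law of $(y,r)$ is reproduced by $q$, so the tilted distribution is exactly $q(u\mid y,r)$. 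Applying the tower property under $q$ gives $Q_m(\bm\theta)=\mathbb E_q[\log\phi_m(u,y;\bm\theta)]$. Using normalization of $\phi_m$ in $y$, the difference $Q_m(\bm\theta^0)-Q_m(\bm\theta)$ is a $q$-weighted conditional KL divergence between $\phi_m(\cdot\mid u;\bm\theta^0)$ and $\phi_m(\cdot\mid u;\bm\theta)$. This is nonnegative, and by the stated local compatibility condition it vanishes only at $\bm\theta^0$. Verifying that the transformed-error model really produces this matching law is the step I expect to be delicate. As a consistency check, for the linear Gaussian case I would confirm directly that \eqref{eq:noise-mstep} has $\sigma_w^{2,0}$ as a fixed point: $\tfrac1M\mathbb E\|\bm y-\bm A\bm x_{\mathrm p}\|^2$ concentrates to $\sigma_w^2(1-c)$ at matched variance, so the $c\,\sigma_w^2$ correction restores $\sigma_w^2$.

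Damping keeps $\bm\theta^0$ fixed, since $(1-\beta)\bm\theta^0+\beta\bm\theta^0=\bm\theta^0$. Hence $(\bm\theta^0,\bm v^\star)$ is a fixed point of the population recursion. At this point every module uses the true parameters, so the recursion is matched VAMP. By (A3) SC-VAMP coincides with VAMP, and by (A4) together with \cite[Thm.~3]{Rangan2019VAMP} the MSE of $\hat{\bm x}$ at the matched SE fixed point equals the replica MMSE.
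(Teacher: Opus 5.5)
Your skeleton matches the paper's proof: $\bm v^\star$ fixed by definition, tower property plus Gibbs' inequality for prior modules, an auxiliary Gaussian-cavity law with a weighted conditional KL for likelihood modules, damping, and then (A3)--(A4) with \cite[Thm.~3]{Rangan2019VAMP}. The gap is the step you flag yourself in the likelihood branch.

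Your tower-property argument there needs the outer law of $(\bm y,\bm r)$ to equal the $q$-marginal. You assert this follows from the VAMP transformed-error model, but that model only says $\bm V^\top(\bm r-\bm u)$ is asymptotically Gaussian. It does not say whether this error is independent of $\bm u$ or of $\bm r$, and that is exactly what decides the question.

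The assertion also conflicts with your first paragraph. If the cavity were $r=u+\sqrt{v^\star}n$ with $n$ independent of $u$, then $\mathrm{Var}(r)=\mathrm{Var}(u)+v^\star$. Under $q$, however, $u=r+\sqrt{v^\star}n'$ with $n'$ independent of $r$, so $\mathrm{Var}(u)=\mathrm{Var}(r)+v^\star$. Both cannot hold. The forward form is correct for the input to a prior module, but not for the prior module's extrinsic output, which is what feeds a likelihood/LMMSE module. From \eqref{eq:siso} with $\alpha=\mathrm{mmse}/v$ at the matched point, one finds $\tfrac1N\langle\bm u-\bm r_{\mathrm{out}},\bm r_{\mathrm{out}}\rangle\to0$ and $\tfrac1N\|\bm u-\bm r_{\mathrm{out}}\|^2\to v_{\mathrm{out}}$. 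These are reverse-channel second moments, so your claim is plausible. Upgrading it to equality in law, however, needs a separate Gaussianity argument that \cite[Thm.~1]{Rangan2019VAMP} does not supply.

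The paper does not prove this matching; it avoids needing it. It states explicitly that the likelihood M-step is analyzed under the local cavity law \eqref{eq:proof-likelihood-cond-law}--\eqref{eq:proof-likelihood-yr-law}, not under the original data-generating law. The ``local compatibility condition'' in the hypotheses (finite tilted normalizer, identification of $\bm\theta_m^0$ through the weighted conditional KL) is imposed on that auxiliary law. With the population objective defined this way:
\begin{itemize}
\item the tilted density is the conditional of $\bm U_m$ given $(\bm Y_m,\bm R_m)$ by construction;
\item $p_m^0(\bm y_m\mid\bm r_m)$ cancels;
\item your KL computation goes through verbatim.
\end{itemize}
So drop the matching assertion, and the forward-channel description for likelihood modules, and adopt that framing instead.

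Your linear-Gaussian sanity check is correct, and it shows why the discrepancy is harmless in that case. The LMMSE residual $\bm y-\bm A\bm x_{\mathrm p}$ depends only on $(\bm w,\bm r-\bm u)$. In both descriptions these are independent, with per-coordinate variances $\sigma_w^2$ and $v_B$.
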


\begin{proof}
The variance component is fixed by definition:
\begin{equation}
\bm v^\star=F(\bm v^\star;\bm\theta^0).
\label{eq:proof-se-fixed}
\end{equation}
We prove that the population M-step also leaves each true parameter unchanged.

Fix a module $m$. If it is a prior module, write
$\phi_m(\bm u_m;\bm\theta_m)$ for the prior density. At the matched SE fixed
point,
\begin{equation}
\bm R_m=\bm U_m+\sqrt{v_m^\star}\bm Z_m,\qquad
\bm U_m\sim\phi_m(\cdot;\bm\theta_m^0),
\label{eq:proof-prior-channel}
\end{equation}
where $v_m^\star$ is the module-$m$ component of $\bm v^\star$. Write
$\phi_{\bm\theta_m}(\bm u_m)=\phi_m(\bm u_m;\bm\theta_m)$ and
$\phi_{\bm\theta_m^0}(\bm u_m)=\phi_m(\bm u_m;\bm\theta_m^0)$, and set
$g_m^\star(\bm u_m\mid\bm r_m)=
\mathcal N(\bm u_m;\bm r_m,v_m^\star\bm I)$. By Gaussian symmetry,
$g_m^\star(\bm u_m\mid\bm r_m)=
\mathcal N(\bm r_m;\bm u_m,v_m^\star\bm I)$. Hence the marginal density of
$\bm R_m$ under \eqref{eq:proof-prior-channel} is
\begin{equation}
p_{R,m}^0(\bm r_m)
=
\int
\phi_{\bm\theta_m^0}(\bm u'_m)
g_m^\star(\bm u'_m\mid\bm r_m)
\,d\bm u'_m .
\label{eq:proof-prior-r-marginal}
\end{equation}
The matched tilted density is obtained by multiplying the prior factor and the
Gaussian message, then normalizing in $\bm u_m$:
\begin{equation}
\tilde p_m^0(\bm u_m\mid\bm r_m)
=
\frac{
\phi_{\bm\theta_m^0}(\bm u_m)g_m^\star(\bm u_m\mid\bm r_m)}
{p_{R,m}^0(\bm r_m)}.
\label{eq:proof-prior-tilt}
\end{equation}
Thus, in this matched scalar channel, the tilted density coincides with the
conditional density of $\bm U_m$ given $\bm R_m=\bm r_m$.
Substituting \eqref{eq:proof-prior-tilt} into the population M-step objective
and integrating out $\bm r_m$ gives
\begin{align}
Q_m(\bm\theta_m)
&=
\int p_{R,m}^0(\bm r_m)
\int \tilde p_m^0(\bm u_m\mid\bm r_m)
\notag\\[-0.3em]
&\hspace{5em}{}
\log\phi_{\bm\theta_m}(\bm u_m)
\,d\bm u_m\,d\bm r_m\notag\\
&=
\iint
\frac{
p_{R,m}^0(\bm r_m)
\phi_{\bm\theta_m^0}(\bm u_m)g_m^\star(\bm u_m\mid\bm r_m)}
{p_{R,m}^0(\bm r_m)}
\notag\\[-0.3em]
&\hspace{5em}{}
\log\phi_{\bm\theta_m}(\bm u_m)
\,d\bm u_m\,d\bm r_m\notag\\
&=
\iint
\phi_{\bm\theta_m^0}(\bm u_m)g_m^\star(\bm u_m\mid\bm r_m)
\notag\\[-0.3em]
&\hspace{5em}{}
\log\phi_{\bm\theta_m}(\bm u_m)
\,d\bm r_m\,d\bm u_m\notag\\
&=
\int
\phi_{\bm\theta_m^0}(\bm u_m)
\left[\int g_m^\star(\bm u_m\mid\bm r_m)\,d\bm r_m\right]
\notag\\[-0.3em]
&\hspace{5em}{}
\log\phi_{\bm\theta_m}(\bm u_m)\,d\bm u_m\notag\\
&=
\int \phi_{\bm\theta_m^0}(\bm u_m)
\log\phi_{\bm\theta_m}(\bm u_m)\,d\bm u_m.
\label{eq:proof-prior-q}
\end{align}
Consequently,
\begin{align}
Q_m(\bm\theta_m^0)-Q_m(\bm\theta_m)
&=
\int \phi_{\bm\theta_m^0}(\bm u_m)
\log
\frac{\phi_{\bm\theta_m^0}(\bm u_m)}
     {\phi_{\bm\theta_m}(\bm u_m)}
\,d\bm u_m\notag\\
&= 
D_{\mathrm{KL}}\!\left(
\phi_{\bm\theta_m^0}\,\middle\|\,\phi_{\bm\theta_m}
\right)\ge 0.
\label{eq:proof-prior-kl}
\end{align}

For a likelihood module, write
$\phi_{\bm\theta_m}(\bm y_m\mid\bm u_m)
=\phi_m(\bm y_m\mid\bm u_m;\bm\theta_m)$ and
$\phi_{\bm\theta_m^0}(\bm y_m\mid\bm u_m)
=\phi_m(\bm y_m\mid\bm u_m;\bm\theta_m^0)$. On the LMMSE side, standard VAMP SE
describes the incoming error in the right-singular-vector coordinates, i.e.,
$\bm V^\top(\bm r_m-\bm u_m)$ is asymptotically Gaussian at the matched fixed
point \cite[Thm.~1]{Rangan2019VAMP}. This motivates the Gaussian cavity used by
the likelihood/LMMSE module. Let $g_m^\star(\bm u_m\mid\bm r_m)$ denote this
matched Gaussian message/cavity density, written in the original coordinates as
$\mathcal N(\bm u_m;\bm r_m,v_m^\star\bm V\bm V^\top)$. Together with the matched
likelihood factor, it defines a local cavity law conditional on $\bm r_m$,
\begin{equation}
p_m^0(\bm u_m,\bm y_m\mid\bm r_m)
=
g_m^\star(\bm u_m\mid\bm r_m)\,
\phi_{\bm\theta_m^0}(\bm y_m\mid\bm u_m),
\label{eq:proof-likelihood-cond-law}
\end{equation}
under which $\bm Y_m$ and $\bm R_m$ are conditionally independent given
$\bm U_m$. Its tilted normalizer is the conditional evidence
\begin{align}
p_m^0(\bm y_m\mid\bm r_m)
&=
\int
\phi_{\bm\theta_m^0}(\bm y_m\mid\bm u_m)
g_m^\star(\bm u_m\mid\bm r_m)
\,d\bm u_m .
\label{eq:proof-likelihood-normalizer}
\end{align}
Let $p_{R,m}^0$ be the matched population law of the incoming cavity message.
Then the corresponding local law of $(\bm Y_m,\bm R_m)$ is
\begin{equation}
p_m^0(\bm y_m,\bm r_m)
=
p_{R,m}^0(\bm r_m)p_m^0(\bm y_m\mid\bm r_m).
\label{eq:proof-likelihood-yr-law}
\end{equation}
Thus
\begin{equation}
\tilde p_m^0(\bm u_m\mid\bm y_m,\bm r_m)
=
\frac{
g_m^\star(\bm u_m\mid\bm r_m)
\phi_{\bm\theta_m^0}(\bm y_m\mid\bm u_m)}
{p_m^0(\bm y_m\mid\bm r_m)}
\label{eq:proof-likelihood-tilt}
\end{equation}
is the conditional density of $\bm U_m$ given $(\bm Y_m,\bm R_m)$ under
\eqref{eq:proof-likelihood-cond-law} and \eqref{eq:proof-likelihood-yr-law}. The
likelihood M-step is analyzed under this local cavity law, not under the
original data-generating law. Hence the population M-step objective is
\begin{align}
Q_m(\bm\theta_m)
&=
\iiint
p_{R,m}^0(\bm r_m)
g_m^\star(\bm u_m\mid\bm r_m)
\phi_{\bm\theta_m^0}(\bm y_m\mid\bm u_m)
\notag\\[-0.3em]
&\hspace{2em}{}
\log\phi_{\bm\theta_m}(\bm y_m\mid\bm u_m)
\notag\\[-0.3em]
&\hspace{2em}{}
d\bm y_m\,d\bm u_m\,d\bm r_m .
\label{eq:proof-likelihood-objective}
\end{align}
Equivalently, one may start from the outer law
\eqref{eq:proof-likelihood-yr-law}; substituting \eqref{eq:proof-likelihood-tilt}
then cancels $p_m^0(\bm y_m\mid\bm r_m)$. Therefore,
\begin{align}
&Q_m(\bm\theta_m^0)-Q_m(\bm\theta_m)
\notag\\
&=
\iint
p_{R,m}^0(\bm r_m)g_m^\star(\bm u_m\mid\bm r_m)
\notag\\[-0.3em]
&\hspace{2em}{}
D_{\mathrm{KL}}\!\left(
\phi_{\bm\theta_m^0}(\cdot\mid\bm u_m)\,\middle\|\,
\phi_{\bm\theta_m}(\cdot\mid\bm u_m)
\right)
d\bm u_m\,d\bm r_m
\ge 0.
\label{eq:proof-likelihood-kl}
\end{align}
Together with prior identifiability and the likelihood identifiability
assumption under the cavity-induced weighting, equality in either case implies
\begin{equation}
\bm\theta_m=\bm\theta_m^0.
\label{eq:proof-ident}
\end{equation}
Thus the population M-step satisfies
\begin{equation}
\mathcal M_m(\bm\theta^0,\bm v^\star)=\bm\theta_m^0.
\label{eq:proof-mstep-fixed}
\end{equation}
The damping step preserves the true parameter,
\begin{equation}
(1-\beta)\bm\theta^0+\beta\bm\theta^0=\bm\theta^0,
\label{eq:proof-damping}
\end{equation}
so $(\bm\theta^0,\bm v^\star)$ is a fixed point of the population adaptive
recursion. At this point the recursion is the matched SC-VAMP/VAMP recursion;
the replica Bayes-optimality of $\hat{\bm x}$ follows from (A3)--(A4) and
\cite[Thm.~3]{Rangan2019VAMP}.
\end{proof}
Proposition~\ref{prop:fp} places the true parameters at a fixed
point but does not address its local stability, that is, whether nearby
parameter values are attracted to it; a stability analysis is left to future
work.

\section{Numerical Results}
\label{sec:results}

We compare three variants throughout. The oracle runs SC-VAMP with the true
parameters. The adaptive method starts from mismatched parameters and
updates them using the proposed EM steps. The frozen baseline uses the
same mismatched initialization as the adaptive method but disables all EM
updates, isolating the value of self-tuning.
The iteration panels also show the oracle SE prediction: the
matched state-evolution recursion run at the true parameters, plotted as
markers.
Performance is reported by the
normalized mean-squared error
\[
\mathrm{NMSE}=\|\hat{\bm x}-\bm x\|^2/\|\bm x\|^2 .
\]
The iteration curves are averaged over independent Monte-Carlo trials. In the
SNR sweeps, adaptive and frozen show the geometric mean over trials with
$\pm1\sigma$ error bars in the $\log_{10}$ domain (natural for log-scale NMSE),
and the oracle curve shows the median. In the randomized initialization
experiments, adaptive and frozen share the same initial draw in each trial.

\subsection{Linear CS}
For the linear model $\bm y=\bm A\bm x+\bm w$, $\bm A$ is row-orthogonal with
$N{=}2000$ and $M{=}1000$. The signal follows the BG prior with
$\rho{=}0.1$ and $\sigma_x^2{=}1$, and the noise variance is set from the
specified SNR. The adaptive method learns the prior parameters
$(\rho,\sigma_x^2)$ in the prior module and the noise variance $\sigma_w^2$ in
the linear observation module. Each run uses $25$ SC-VAMP iterations.

Figures~\ref{fig:lin}(a) and \ref{fig:lin}(b) use a fixed mismatched
initialization at SNR $20$~dB:
$\rho$ is initialized at $3\times$ its true value, $\sigma_x^2$ at
$0.5\times$, and $\sigma_w^2$ at $4\times$. Averaged over $50$ Monte-Carlo
trials, adaptive SC-VAMP quickly approaches the oracle NMSE, whereas the frozen
baseline converges to a visibly worse fixed point. The parameter trajectories
show why: $(\sigma_w^2,\rho,\sigma_x^2)$ all move close to their true values
within roughly $15$ iterations. The undamped update $\beta{=}1$ reaches the
truth faster, while $\beta{=}0.5$ gives smoother transients.

Figure~\ref{fig:lin}(c) tests robustness rather than a single hand-picked
initialization. For each SNR and each of $1000$ trials, we draw
$\rho\in[0.02,0.6]$ and draw both $\sigma_x^2$ and $\sigma_w^2$ within
$\pm10$~dB of their true values. Adaptive and frozen start from the same draw.
Across the full SNR range, adaptive remains essentially on the oracle curve,
while frozen mismatch is about $1.5$--$2.5\times$ worse.

\subsection{One-bit CS}
For one-bit CS, the measurements are
$\bm y=\mathrm{sign}(\bm A\bm x+\bm w)$ with $N{=}1000$ and $M{=}2000$. We use
the same BG prior with $\rho{=}0.1$. Because one-bit observations identify only
the scale ratio $\sigma_x^2/\sigma_w^2$, we fix $\sigma_w^2{=}1$ and tune
$(\rho,\sigma_x^2)$. The input SNR is
$10\log_{10}(\rho\sigma_x^2/\sigma_w^2)$, measured before the sign map. The
one-bit likelihood module uses the probit factor
$p(y_i\mid z_i;\sigma_w^2)=\Phi(y_i z_i/\sigma_w)$, and the EM update is damped
with $\beta{=}0.3$ for stability. The fixed-start experiments run $40$ SC-VAMP
iterations; the SNR sweep uses $120$.

Figures~\ref{fig:1bit}(a) and \ref{fig:1bit}(b) show the fixed-start experiment
at input SNR $10$~dB, averaged over $50$ trials. The initialization is strongly
mismatched: $\rho$ starts at $0.30$ and $\sigma_x^2$ at $0.30$ times its true
value. Even with the nonlinear quantizer, adaptive SC-VAMP tracks the oracle
NMSE, while frozen mismatch remains above it; the parameter plot shows both
$\rho$ and the identifiable scale being recovered.

Figure~\ref{fig:1bit}(c) sweeps input SNR over $1000$ randomized trials per
point. We draw $\rho\in[0.02,0.5]$ and draw $\sigma_x^2$ within approximately
$\pm3$~dB of the true value; the narrower scale range reflects the stronger
scale sensitivity of the probit likelihood. Adaptive consistently improves over
the frozen baseline, from about $1.2\times$ lower NMSE at $-10$~dB to
$3.1\times$ at $10$~dB, the frozen penalty growing with SNR, where the wrong
prior scale matters more. The sweep stops at $10$~dB: beyond it the sign map is
nearly scale-invariant, so $\sigma_x^2$ is no longer reliably identifiable from
one-bit data.

\section{Conclusion}
Adaptive SC-VAMP attaches an EM update to each parameterized factor, reusing
moments the SISO interface has already computed. It leaves the modular
mean--variance interface intact, has the true parameters as a Bayes-optimal
population fixed point (Proposition~\ref{prop:fp}), and approaches oracle
performance on linear and one-bit BG CS from strong mismatch. Synthetic BG models are used by design: the
oracle parameters and the matched SE prediction are then exactly known, which
is what makes the oracle/frozen comparison meaningful. Validation on measured
channel models, and learned-score priors, for which
self-tuning must rely on the tilted moments alone, remain
future work.

\section*{Acknowledgment}
This work was supported by JST, CRONOS, Japan Grant Number JPMJCS25N5.

\bibliographystyle{IEEEtran}
\bibliography{refs}

\end{document}